\documentclass{llncs}

\usepackage[utf8]{inputenc}
\usepackage{amsmath,amssymb}
\usepackage{xspace}
\usepackage{dsfont}
\usepackage[capitalise]{cleveref}
\usepackage[ruled,vlined]{algorithm2e}
\usepackage{graphicx}
\usepackage[caption=false,font=footnotesize]{subfig} 
\usepackage[T1]{fontenc}
\usepackage{longtable}
\usepackage{url}
\usepackage{color}
\usepackage{soul}

\usepackage{xcolor}
\usepackage[inline,nomargin,author=]{fixme}
\fxsetface{inline}{\rmfamily}


\newcommand{\ori}[1]{\smash{\overrightarrow{\!#1}}}

\newcommand{\loghole}{log-hole\xspace}

\newcommand{\hide}[1]{}

\frenchspacing

\title{Enumerating Cyclic Orientations of a Graph}

\author{Alessio Conte\inst{1} \and Roberto Grossi\inst{1} \and Andrea Marino\inst{1} \and Romeo Rizzi\inst{2}}
\institute{Universit\`a di Pisa, \email{conte,grossi,marino@di.unipi.it} \and Universit\`a di Verona, \email{rizzi@di.univr.it}}


\pagestyle{plain}

\begin{document}
\maketitle

\begin{abstract}
  Acyclic and cyclic orientations of an undirected graph have been
  widely studied for their importance: an orientation is acyclic if it
  assigns a direction to each edge so as to obtain a directed acyclic
  graph (DAG) with the same vertex set; it is cyclic otherwise. As far
  as we know, only the enumeration of acyclic orientations has been
  addressed in the literature. In this paper, we pose the problem of
  efficiently enumerating all the \emph{cyclic} orientations of an
  undirected connected graph with $n$ vertices and $m$ edges,
  observing that it cannot be solved using algorithmic techniques
  previously employed for enumerating acyclic orientations.  We show
  that the problem is of independent interest from both combinatorial
  and algorithmic points of view, and that each cyclic orientation can
  be listed with $\tilde{O}(m)$ delay time. Space usage is $O(m)$ with
  an additional setup cost of $O(n^2)$ time before the enumeration
  begins, or $O(mn)$ with a setup cost of $\tilde{O}(m)$ time.
\end{abstract}

\section{Introduction}
\label{sec:intro}
Given an undirected graph $G(V,E)$ with $n=|V|$ vertices and $m=|E|$
edges, an orientation transforms $G$ into a directed graph $\ori{G}$
by assigning a direction to each edge. That is, an orientation of $G$
is the directed graph $\ori{G}(V,\ori{E})$ such that the vertex set
$V$ is the same as $G$, and the edge set $\ori{E}$ is an orientation
of $E$: exactly one direction between $(u,v)\in \ori{E}$ and
$(v,u)\in \ori{E}$ holds for any undirected edge $\{u,v\}\in E$.  An
orientation $\ori{G}$ is \emph{acyclic} when $\ori{G}$ does not
contain any directed cycles, so $\ori{G}$ is a directed acyclic graph
(DAG); otherwise we say that the orientation $\ori{G}$ is cyclic.

Acyclic orientations of undirected graphs have been studied in
depth. Many results concern the number of such orientations:
Stanley~\cite{stanley87} shows how, given a graph $G$ and its
chromatic number $\chi$, the number of acyclic orientations of $G$ can
be computed by using the chromatic polynomial (a special case of
Tutte's polynomial). Other results rely on the so called acyclic
orientation game: Alon et al.~\cite{alon1995acyclic} inquire about the
number of edges that have to be examined in order to identify an
acyclic orientation of a random graph $G$; Pikhurko~\cite{CPC:6776456}
gives an upper bound on this number of edges in general graphs.  The
counting problem is known to be \#P-complete~\cite{linial1986hard} and
enumeration algorithms that list all the acyclic orientations of a
graph are given in \cite{Barbosa199971} and \cite{Squire1998275}.

We consider \emph{cyclic} orientations, which have been also studied
from many points of view. Counting them is
co-\#P-complete~\cite{linial1986hard}. In Fisher et
al.~\cite{Fisher199773}, given a graph G and an acyclic orientation of
it, the number of \emph{dependent edges}, i.e. edges generating a
cycle if reversed, has been studied.
This number of edges implicitly gives a hint on the number of
cyclic orientations in a graph. 
In~\cite{Richardson:1996:DAD:2074284.2074338} an algorithm has been
given to make inference about causal structure in cyclic
graphs. In~\cite{Spirtes:1995:DCG:2074158.2074214} directed cyclic
graphs are used to model economic processes, and an algorithm is given
to characterize conditional independence constraints of these
processes.

In this paper we address the problem of enumerating all the cyclic
orientations of a graph.
\begin{problem}
  Enumerating the set of all the directed graphs $\ori{G}$ that are
  cyclic orientations of an undirected graph $G$.
\label{problem1}
\end{problem}

We analyze the cost of an enumeration algorithm for
Problem~\ref{problem1} in terms of its \emph{setup} cost, meant as the
initialization time before the algorithm is able to lists the
solutions, and its \emph{delay} cost, which is the worst-case time
between any two consecutively enumerated solutions
(e.g. \cite{JohnsonP88}). We are interested in algorithms with
guaranteed $\tilde{O}(m)$ delay, where the $\tilde{O}$ notation ignores
polylogs.

A naive solution to Problem~\ref{problem1} uses the fact that
enumeration algorithms exist for listing acyclic
orientations~\cite{Barbosa199971,Squire1998275}. It enumerates the
cyclic orientations by difference, namely, by enumerating all the
$2^m$ possible edge orientations and removing the $\alpha$ acyclic
ones. However, this solution does not guarantee any polynomial delay, as
the number $\beta = 2^m-\alpha$ if cyclic orientations can be much
larger or much smaller than the number $\alpha$ of acyclic ones.  For
example, a tree with $m$ edges has $\alpha=2^m$ and $\beta=0$. On the
other extreme of the situation, we have cliques. An oriented clique is
also called a \textit{tournament}, and a \textit{transitive
  tournament} is a tournament with no cycles. A clique of $n$ nodes
(and $m=\frac{n\cdot(n-1)}{2}$ edges) can generate $2^m$ different
tournaments, out of which exactly $\alpha=n!$ will be transitive
tournaments~\cite{moon1968topics}. As $2^m$ grows faster than $n!$, we
have that the ratio $\alpha/\beta$ tends to 0 for increasing $n$,
where $\beta = 2^m-\alpha$.

To the best of our knowledge, an enumeration algorithm for
Problem~\ref{problem1} with guaranteed $\tilde{O}(m)$ delay is still
missing. We provide such an algorithm in this paper, namely, listing
each cyclic orientation with $\tilde{O}(m)$ delay time. Space usage is
$O(m)$ memory cells with a setup cost of $O(n^2)$ time, or $O(mn)$
memory cells with a setup cost of $\tilde{O}(m)$ time. Interestingly, 
Problem~\ref{problem1} reveals to be a rich source for enumerations techniques, and our solution
offers new combinatorial and algorithmic techniques when compared to
previous work on the enumeration of acyclic
orientations~\cite{Barbosa199971,Squire1998275}.

In the following, for the sake of clarity, we will call \emph{edges}
the elements of $E$ (undirected graph) and \emph{arcs} the elements of
$\ori{E}$ (directed graph).  We will assume that the graph in input
$G$ is connected and we will denote as $n=|V|$ and $m=|E|$
respectively its number of nodes and edges.

The paper is organized as follows. Section~\ref{sec:overview}
 gives an overview of our enumeration algorithm. Section~\ref{sec:setup} describes the initialization steps, and shows how to reduce the problem from the input graph to a suitable multi-graph that guarantees to have a chordless cycle (hole) of logarithmic size. The latter is crucial to obtain the claimed delay.
Section~\ref{sec:enum} shows how to enumerate in the multigraph and obtain the cyclic orientations for the input graph. Section~\ref{sec:absorb} describes how to absorb the setup cost using more space. Finally, some conclusions are drawn in Section~\ref{sec:conclusions}.

\section{Algorithm overview}
\label{sec:overview}

The intuition behind our algorithm for an undirected connected graph $G(V_G,E_C)$ is the following one. Suppose that $G$ is cyclic, otherwise there are no cyclic orientations. Consider one cycle\footnote{This will actually be a chordless cycle of logarithmic size (called \loghole).} $C(V_C,E_C)$ in $G$: we can orient its edges in two ways so that the resulting $\ori{C}$ is a directed cycle. At this point, \emph{any} orientation of the remaining edges, e.g. those in $E_G \setminus E_C$, will give a cyclic orientation of $G$. Thus, the interesting cases are when the resulting $\ori{C}$ is not a directed cycle. 

The idea is first to generate all possible orientations of the edges in $E_G \setminus E_C$, and then assign some suitable orientations to the edges in $E_C$. This guarantees that we have at least two solutions for each orientation of $E_G \setminus E_C$, namely, setting the orientation of $E_C$ so that $\ori{C}$ is one of the two possible directed cycles. Yet this is not enough as we could have a cyclic orientation even if $\ori{C}$ is \emph{not} a directed cycle. 

Therefore we must consider some cases. One easy case is that the orientation of $E_G \setminus E_C$ already produces a directed cycle: any orientation of $E_C$ will give a cyclic orientation of $G$. Another easy case, as seen above, is for the two orientations of $E_C$ such that $\ori{C}$ is a directed cycle:  any orientation of $E_G \setminus E_C$ will give a cyclic orientation of $G$. It remains the case when the orientations of both $E_G \setminus E_C$ and $E_C$ are individually acyclic: when put together, we might have or not a directed cycle in the resulting orientation of $G$. To deal with the latter case, we need to ``massage'' $G$ and transform it into a multigraph as follows. We refer the reader to Algorithm~\ref{alg:structure}. 

\textbf{Algorithm setup} is performed as described in Section~\ref{sec:setup}. We preprocess $G$ with dead-ends removal and edge chain compression. The result is an undirected connected multigraph $M(V_M,E_M)$, where the edges are labeled as \emph{simple} and \emph{chain}. After that we find a chordless cycle of logarithmic size $C$ in $M$, and remove $E_C$ from $E_M$, obtaining the labeled multigraph $M'(V_M,E'_M)$, where $E'_M = E_M \setminus E_C$.  

\textbf{Enumerating cyclic orientations} described in Section~\ref{sec:enum} exploits the property (which we will show later) that finding cyclic orientations of $G$ corresponds to finding particular orientations in $M'$, called \emph{extended cyclic orientations}, and of $C$, called \emph{legal orientations}. In the \textbf{for} loop, these orientations of $M'$ and $C$ are enumerated so as to find all the cyclic orientations of $G$. As we will see for the latter task, it is important to have $C$ of logarithmic size to guarantee our claimed delay.

\begin{algorithm}[t]
\KwIn{An undirected connected graph $G(V,E)$}
\KwOut{All the cyclic orientations $\ori{G}(V,\ori{E})$}
\smallskip
\textbf{Algorithm setup (Section~\ref{sec:setup}):}\\
Remove dead-ends (nodes of degree $1$) recursively from $G$\\
$M(V_M,E_M)\gets $ replace $G$'s maximal paths of degree-2 nodes with chain edges\\
$C(V_C,E_C) \gets $ a \loghole of $M$\\
$M'(V_M,E'_M)\gets $ delete the edges of $C$ from $M$, i.e.  $E'_M = E_M \setminus E_C$\\[2mm]
\textbf{Enumerate cyclic orientations (Section~\ref{sec:enum}):}\\
\For{\emph{each extended orientation $\ori{M}'$ of multigraph $M'$}}{
    \For{\emph{each legal orientation $\ori{C}$ of \loghole $C$ (see Algorithm~\ref{alg:ternpart})}}{
        $\ori{M}''(V_M,\ori{E}'') \gets $ combine $\ori{M}'$ and $\ori{C}$, where $\ori{E}'' = \ori{E}'_M\cup \ori{E}_C$ \\
        Output each of the cyclic orientations $\ori{G}$ of $G$ corresponding to  $\ori{M}''$\\
    }
}
\caption{Returning all the cyclic orientations of $G$}
\label{alg:structure}
\end{algorithm}

\section{Algorithm Setup}
\label{sec:setup}
\hide{%
In this section, given the undirected graph $G$, we show to compute an undirected labeled multigraph $M$ such that finding cyclic orientations of $G$ is reduced to finding particular orientations in $M$, called extended cyclic orientations. Our method for finding these orientations is explained in Section~\ref{sec:enum}, and makes use of the \loghole of $M$. hence, in the second phase of the preprocessing, the \loghole of $M$ is computed. The first preprocessing phase runs in $O(|E|)$; the second preprocessing phase has a cost that depends on the number of solutions will be produced by the algorithm: it runs in $O(s\cdot |E_M|)=O(s\cdot |E|)$, with $s\leq |V_M|\leq |V|$, for some $s$ such that the number of solutions is at least $2^s$. This means that our setup procedure is more time-consuming when we will produce a sufficiently large number of solutions, while it is almost linear otherwise, meaning that this cost is completely negligible in the overall cost of the algorithm. The space usage of the whole setup process is linear.
In Section~\ref{sec:absorb} we will see a way of computing the setup whose time cost is equal to the delay of the enumeration algorithm presented in Section~\ref{sec:enum} and whose space usage is $\Theta( |V|\cdot |E|)$.
}

\subsection{Reducing the problem to extended cyclic orientations}
\label{sec:rec}
In the following we show how to reduce Problem~\ref{problem1} to an extended version that allows us to neglect dead-ends and chains.


\paragraph{\bf Dead-end removal.}
Given an undirected graph $G(V,E)$, a dead-end is a node of degree 1.
We  recursively remove all \textit{dead-ends}, so that all the surviving nodes have degree 2 or greater. By removing these nodes and computing the cyclic orientations in the cleaned graph, we can still generate solutions for the original graph by using both orientations of the unique incident edge to each dead-end, as emphasized by the following lemma, whose proof is straightforward.
\begin{lemma}
Let $G$ be a graph, $u$ a dead-end and $\{u,v\}$ its unique incident edge. Let $G'$ be the graph $G$ without $u$ and the edge $\{u,v\}$. The set of all the cyclic orientations of $G$ is composed by the orientations $\ori{G}'(V'\cup\{u\},\ori{E}'\cup \{(u,v)\})$ and $\ori{G}'(V'\cup\{u\},\ori{E}'\cup \{(v,u)\})$, for all the cyclic orientations $\ori{G}'(V',\ori{E}')$ of $G'$.
\label{lem:deadends}
\end{lemma}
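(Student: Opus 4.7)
The plan is to argue that the dead-end $u$ is irrelevant to cyclicity, so orienting its unique incident edge in either direction preserves the cyclic/acyclic status of any orientation. Concretely, I would first observe that since $u$ has degree $1$ in $G$, it also has in-degree plus out-degree equal to $1$ in any orientation $\ori{G}$ of $G$. Hence $u$ can neither be the head of two arcs nor the tail of two arcs, and in particular cannot lie on any directed cycle (a directed cycle requires each of its vertices to have at least one incoming and one outgoing arc).

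Next I would use this observation to establish a bijection between cyclic orientations. If $\ori{G}$ is any orientation of $G$, let $\ori{G}'$ denote its restriction obtained by deleting $u$ and the arc on $\{u,v\}$ (whichever direction it has). By the previous paragraph, every directed cycle of $\ori{G}$ lies entirely in $V\setminus\{u\}$ and therefore is a directed cycle of $\ori{G}'$; conversely, every directed cycle of $\ori{G}'$ is trivially a directed cycle of $\ori{G}$. Thus $\ori{G}$ is cyclic if and only if $\ori{G}'$ is cyclic, independently of the orientation chosen for $\{u,v\}$.

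From this equivalence the lemma follows immediately: given any cyclic orientation $\ori{G}'(V',\ori{E}')$ of $G'$, both extensions $\ori{G}'(V'\cup\{u\},\ori{E}'\cup\{(u,v)\})$ and $\ori{G}'(V'\cup\{u\},\ori{E}'\cup\{(v,u)\})$ are cyclic orientations of $G$; conversely, every cyclic orientation of $G$ restricts to a cyclic orientation of $G'$ after removing the unique arc incident to $u$, and so arises in exactly this way. The two extensions are clearly distinct since the arc on $\{u,v\}$ differs, so no double-counting occurs.

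There is really no obstacle here beyond keeping the set-theoretic bookkeeping clean; the whole content is the structural fact that a degree-$1$ vertex cannot sit on a directed cycle, which is why the author describes the proof as straightforward.
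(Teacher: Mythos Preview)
Your argument is correct and matches the paper's own reasoning: the entire content is the observation that a degree-$1$ vertex cannot lie on any directed cycle, so removing or re-attaching the arc on $\{u,v\}$ does not change whether the orientation is cyclic. The paper treats the lemma as immediate (its suppressed proof phrases the converse direction by contradiction rather than via your explicit restriction/bijection, but the underlying step is identical).
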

The dead-end removal can be done by performing a DFS recursive traversal of $G$, starting from an arbitrary node $x$. Every time a node of degree 1 is visited, it is removed from the graph. When the recursion ends in a node, the latter is removed if all of its neighbors have been removed except one (which is its parent in the DFS tree). Finally, when the traversal ends, it might be that the node from which we started has degree 1. To complete the process, if $x$ has now degree 1, we remove it from the graph. 
%
The DFS of $G$ and the removal of its dead-ends can be done in $O(m)$.

The rationale for removing dead-ends is to have shorter cycles: for example, consider a ``necklace'' graph with $n$ nodes, for $n$ even, such that $n/2$ nodes form a cycle, and the remaining $n/2$ nodes have degree~1 and are attached to one of the nodes in the cycle, such that each node in the cycle has degree~3 and is connected to one node of degree 1. With the removal of dead-ends, the cycle has only nodes of degree~2 and can be compressed as discussed in the next paragraph.

\paragraph{\bf Chain compression.}
It consists in finding  all the maximal paths $v_1,\ldots , v_k$ where $v_i$ has degree 2 (with $2\leq i\leq k-1$), and replacing each of them, called \emph{chain}, with just one edge, called \emph{chain edge}. It is easy to see that this task can be accomplished in $O(m)$ time by traversing the graph $G$ in a DFS fashion from a node of degree $\geq 3$.
The output is an undirected connected multigraph $M(V_M,E_M)$, where $V_M \subseteq V$ are the nodes of $V$ whose degree is $\geq 3$, and $E_M$ are the chain edges plus all the edges in $E$ which are not part of a chain.\footnote{The degenerate case of $M$ with $\leq 3$ nodes can be handled separately.} The latter ones are called simple edges to distinguish them from the chain edges. In the rest of the paper, $M$ will be seen as a multigraph where $|V_M| \geq 4$ and each of the edges has a label in \{simple, chain\}, since it might contain parallel edges or loops. For this, we define the concept of \emph{extended orientation} as follows.
\begin{definition}[Extended Orientation]
\emph{%
For a multigraph $M(V_M,E_M)$ having self loops and  edges labeled as simple and chain, an \emph{extended orientation} $\ori{M}(V_M,\ori{E_M})$ is an orientation $\ori{E_M}$
of its edge set $E_M$: for any simple edge $\{u,v\}$, exactly one direction between $(u,v)$ and
$(v,u)$ holds; for any chain edge $\{u,v\}$, either is \emph{broken}, or exactly one direction between $(u,v)$ and $(v,u)$ holds. A directed cycle in $\ori{M}$ cannot contain a broken edge.
}
\end{definition}
Broken edges correspond to chain edges that, when expanded as edges of $G$, do not have an orientation as a directed path. This means that they cannot be traversed in either direction. Note that this situation cannot happen for simple edges. The following lemma holds.
\begin{lemma}
\label{lem:ext}
If we have an algorithm that list all the extended cyclic orientations of $M(V_M,E_M)$ with delay $f(|E_M|)$, for some $f:\mathds{R}\to\mathds{R}$, then we have an algorithm that lists all the acyclic orientations of the graph $G(V,E)$ with delay $O(f(|E_M|) + |E|)$. 
\end{lemma}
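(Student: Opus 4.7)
The plan is to lift the hypothesized enumerator on the reduced multigraph $M$ to an enumerator on $G$ by expanding every output $\ori{M}$ into all orientations of $G$ it encodes. Reading the conclusion literally, I interpret the hypothesis as supplying extended orientations of $M$ whose expansions to $G$ are acyclic; the word ``cyclic'' in the displayed hypothesis looks like a typographical slip, since it is acyclicity of $\ori{M}$ (no directed cycle through simple and unbroken chain edges) that is preserved under every expansion to $G$, not cyclicity.

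First I would make the correspondence precise. Dead-end removal: by Lemma~\ref{lem:deadends} each dead-end edge lies on no cycle, so both orientations of it preserve acyclicity; this multiplies each solution on the dead-end-free graph by $2^{d}$, where $d$ is the number of dead-ends recursively removed. Chain compression: an internal vertex of a chain has degree two, so any cycle in $G$ crossing a chain must traverse it as a directed path, which is exactly what an unbroken chain edge of $\ori{M}$ encodes in its specified direction; a chain edge marked \emph{broken} witnesses that the chain of $k$ internal edges in $G$ is oriented in one of the $2^{k}-2$ non-path patterns, none of which creates a cycle locally. These two observations together give: $\ori{G}$ is acyclic iff the induced extended orientation $\ori{M}$ is acyclic, uniformly over the dead-end directions and over the non-path expansion chosen for each broken chain edge.

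The algorithm is then a two-level enumeration. The outer loop invokes the hypothesized enumerator, paying $f(|E_{M}|)$ per fresh $\ori{M}$; the inner loop iterates over the Cartesian product of the two choices per dead-end and the $2^{k_{i}}-2$ non-path expansions per broken chain edge via a mixed-radix Gray-code counter, so that consecutive acyclic orientations of $G$ differ in exactly one flipped edge. The first output associated with a new $\ori{M}$ needs an $O(|E|)$ reconstruction of the full orientation on $G$; every subsequent inner output needs only constant incremental bookkeeping plus the unavoidable $O(|E|)$ write of the solution, so every acyclic orientation is delivered within $O(f(|E_{M}|)+|E|)$ time. The main obstacle is the inner enumeration of broken-chain expansions: we must traverse the $2^{k_{i}}-2$ non-path orientations of each broken chain without duplicates and without ever emitting the two all-forward/all-backward patterns, which would collide with the two unbroken orientations of that chain edge under different $\ori{M}$'s. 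A per-chain Gray code that omits those two patterns, composed across broken chains and dead-end bits, gives constant-time transitions and yields the advertised delay.
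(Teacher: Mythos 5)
You have resolved the typo in the lemma in the wrong direction, and this matters. The paper's own proof (and every later use of the lemma, e.g.\ in Section~\ref{sec:enum} and Theorem~\ref{the:delay}) makes clear that the intended conclusion is the enumeration of the \emph{cyclic} orientations of $G$ from the extended \emph{cyclic} orientations of $M$; the stray ``acyclic'' in the conclusion is the slip, not the ``cyclic'' in the hypothesis. Your stated justification for flipping it the other way --- that cyclicity of $\ori{M}$ is not preserved under expansion to $G$ --- is false: a directed cycle of $\ori{M}$ by definition avoids broken edges, so it uses only simple edges and unbroken chain edges, and replacing each unbroken chain edge by its directed path yields a directed cycle of $\ori{G}$ under \emph{every} choice of dead-end directions and of non-path patterns for the broken chains. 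Indeed your own key claim is an ``iff'' ($\ori{G}$ acyclic iff $\ori{M}$ acyclic), which is equivalent to ``$\ori{G}$ cyclic iff $\ori{M}$ cyclic'' and contradicts the sentence you used to motivate the reinterpretation. As written, your proposal therefore proves a (true but) different reduction --- acyclic orientations of $G$ from extended acyclic orientations of $M$ --- which does not serve the paper's goal of enumerating cyclic orientations.

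That said, the machinery you build is essentially the paper's: each extended orientation of $M$ is expanded by fixing the direction of simple edges, expanding each directed chain edge into its directed path, replacing each broken chain of $h$ edges by one of its $2^h-2$ non-path orientations (correctly excluding the two path orientations, which are already accounted for by the two directed versions of that chain edge and would otherwise create duplicates), and taking both directions of each dead-end edge via Lemma~\ref{lem:deadends}. The paper simply iterates over these expansions paying $O(|E|)$ per solution, which already meets the $O(f(|E_M|)+|E|)$ bound, so your Gray-code counter is unnecessary sophistication (though harmless). To repair the proof, keep the correspondence exactly as you stated it, read it in the contrapositive, and expand the extended \emph{cyclic} orientations delivered by the hypothesized enumerator: every expansion of such an orientation is a cyclic orientation of $G$, and every cyclic orientation of $G$ arises exactly once this way.
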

\begin{proof}
For each extended cyclic orientations $\ori{M}$ we return a set $S$ of cyclic orientations of $G$: any edge $e$ of $\ori{M}$ maintains the same direction specified by $\ori{M}$ in all the solutions in $S$; for each chain $c$ of $\ori{M}$, we consider the edges corresponding to $c$ in $G$, say $e_1,e_2,\ldots, e_h$: if $c$ has a direction in $\ori{M}$, the same direction of $c$  is assigned  to all the edges $e_j$ in all the solutions in $S$; if $c$ has no direction assigned, i.e. \emph{broken}, we have to consider all the possible $2^{h}-2$ ways of making the path $e_1,e_2,\ldots, e_h$ broken (these are all the possible ways of directing the edges except the only two directing a path). All the solutions in $S$ differ for the way they replace the chain edges.

Getting extended cyclic orientations in $f(|E_M|)$ delay, iterating over all the chain edges $c$, and iterating over all the corresponding edges of $c$ assigning the specified directions as explained above, we return acyclic orientations of the graph $G(V,E)$ with delay $O(f(|E_M|) + |E|)$.\qed
\end{proof}

Lemma \ref{lem:ext} allows us to concentrate on extended cyclic orientations of the labeled multigraph $M$ rather than on cyclic orientations of $G$. Conceptually, we have to assign binary values (the orientation) to simple edges and ternary values (the orientation or broken) to chain edges. If we complicate the problem on one side by introducing these multigraphs with chain edges, we have a relevant benefit on the other side, as shown next.

\subsection{Logarithmically bounded hole}
\label{sec:girth}

Given the labeled multigraph obtained in Section~\ref{sec:rec}, namely $M(V_M,E_M)$, we perform the following two steps.
\begin{enumerate}
\item \textbf{Finding a \loghole.} Find a \emph{logarithmically bounded hole} (hereafter, \loghole) $C(V_C,E_C)$ in $M(V_M,E_M)$: it is a chordless cycle whose length is either the \emph{girth} of the graph (i.e. the length of its shortest cycle) or this length plus one.\footnote{Minimum cycle means the cycle having minimum number of edges (i.e. a  self loop), where chain edges count just one like normal edges.}
\item \textbf{Removing the \loghole.} Remove the edges in $E_C$ from $M$, obtaining $M'(V_M,E'_M)$, where $E'_M=E_M-E_C$.
\end{enumerate}

\subsubsection{Properties of the \loghole.} Since $M$ is a multigraph with self-loops, a \loghole $C(V_C,E_C)$ of $M$ can potentially be a self-loop. In any case,  the following well-known result holds.
\begin{lemma}[Logarithmic girth~\cite{B78,EP62}]
\label{prop:girth}
Let $G(V,E)$ be a graph in which every node has degree at least $3$. The girth of $G$ is $\leq 2\lceil\log |V|\rceil$.
\end{lemma}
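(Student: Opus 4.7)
The plan is to argue by contradiction using a breadth-first search (BFS) whose layers must grow geometrically whenever the girth is large. Set $k = \lceil \log |V|\rceil$ and suppose the girth satisfies $g > 2k$. Pick any vertex $v$ and run BFS from $v$, producing layers $L_0 = \{v\}, L_1, L_2, \ldots$, where $L_i$ is the set of vertices at distance exactly $i$ from $v$.

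Next, I would show that for every $i \le k$, the subgraph induced by $L_0 \cup \cdots \cup L_i$ is a tree. If it contained a cycle, a standard lowest-common-ancestor argument on the BFS tree would produce a cycle of length at most $2i+1 \le 2k+1$; since any cycle must have length $\ge g > 2k$, no such edge can exist for $i \le k-1$, where the bound $2i+1 \le 2k-1$ is strictly less than $g$. A more detailed bookkeeping (separating same-level edges and cross-level edges) lets us also conclude that each $u \in L_{k-1}$ contributes at least two \emph{distinct} neighbors to $L_k$: every non-parent edge incident to $u$ must go to $L_k$ (otherwise a shorter cycle arises), and no two $u,u' \in L_{k-1}$ can share a neighbor $w \in L_k$ (the cycle through $w$ and the LCA of $u,u'$ would have length at most $2k < g$).

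From tree-likeness, the standard Moore-type estimate follows. Since $\deg(v) \ge 3$, we have $|L_1| \ge 3$; since each $u \in L_i$ with $1 \le i \le k-1$ has at least two distinct children in $L_{i+1}$, induction gives $|L_i| \ge 3 \cdot 2^{i-1}$ for $1 \le i \le k$. Summing,
\[
|V| \;\ge\; \sum_{i=0}^{k} |L_i| \;\ge\; 1 + \sum_{i=1}^{k} 3\cdot 2^{i-1} \;=\; 3\cdot 2^k - 2.
\]
For $k \ge 1$ this strictly exceeds $2^k$, but $|V| \le 2^k$ by definition of $k = \lceil \log |V|\rceil$, a contradiction. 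Hence $g \le 2\lceil \log |V|\rceil$.

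The main obstacle I expect is the careful cycle-length bookkeeping in the tree-ness argument: one has to distinguish edges internal to a level from edges crossing two adjacent levels, and combine each case with the LCA depth, to be sure that every non-tree edge within depth $k$ closes a cycle strictly shorter than the assumed girth. Everything else is routine geometric counting. (Since the lemma is applied later to a multigraph, one should also note in passing that self-loops give girth $1$ and parallel edges give girth $2$, so the inequality is trivial there.)
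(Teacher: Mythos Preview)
Your argument is the standard Moore-bound proof and is correct. The BFS-tree counting is sound: under the assumption $g>2k$ with $k=\lceil\log|V|\rceil$, every non-tree edge among the first $k$ layers would close a cycle of length at most $2k$, so the layers branch as in a full tree and $|V|\ge 3\cdot 2^{k}-2>2^{k}\ge |V|$. The only place to be slightly careful, as you note, is the disjointness of children in $L_{i+1}$ for every $i\le k-1$ (not just $i=k-1$); the same LCA estimate handles all levels uniformly, so this is routine.

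As for comparison with the paper: there is nothing to compare. The paper does not supply a proof of this lemma; it is quoted as a classical fact with references to Bollob\'as and to Erd\H{o}s--P\'osa. Your write-up is exactly the textbook derivation one finds in those sources, so it is entirely in line with what the citations point to.
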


Lemma~\ref{prop:girth} means that the \loghole $C$ of $M$ has length at most $2\lceil\log |V_M|\rceil+1$, thus motivating our terminology. 

The \loghole $C$ can be found by finding the girth, that is performing a BFS on each node of the multigraph $M$ to identify the shortest cycle that contains that node, in time $O(|V_M|\cdot |E_M|)$. By applying the algorithm in~\cite{itai1978finding}, which easily extends to multigraphs, we compute $C$ in time $O(|V_M|^2)$: in this case, if chords are present in the found $C$, in time $O(\log |V_M|)$ we can check whether $C$ includes a smaller cycle and redefine $C$ accordingly.

\section{Enumerating cyclic orientations}
\label{sec:enum}
We now want to list all the cyclic orientations of the input graph $G$. By Lemma~\ref{lem:ext} this is equivalent to listing the extended cyclic orientations of the corresponding labeled multigraph $M(V_M,E_M)$ obtained from $G$ by dead-end removal and chain compression. We now show that the latter task can be done by suitably combining some orientations from the labeled multigraph
 $M'(V_M,E'_M)$ and the \loghole $C(V_C,E_C)$ using an algorithm that is organized as follows.
\begin{enumerate}
\item \textbf{Finding extended orientations.} Enumerate all extended orientations (not necessarily cyclic) $\ori{M'}$ of the multigraph $M'$.
\item \textbf{Putting back the \loghole.} For each listed $\ori{M'}(V_M,\ori{E'}_M)$, consider all the extended orientations $\ori{C}(V_C,\ori{E}_C)$ of the \loghole $C$ such that $\ori{E'}_M\cup \ori{E}_C$ contains a directed cycle, and obtain the extended cyclic orientations for the multigraph $M$.
\end{enumerate}

\subsubsection{Finding extended orientations.}
This is now an easy  task. For each edge $\{u,v\}$ in $E'_M$ that is labeled as simple, both the directions $(u,v)$ and $(v,u)$ can be assigned; if $\{u,v\}$ is labeled as chain, the directions $(u,v)$ and $(v,u)$, and broken can be assigned. Each combination of these decisions produces an extended orientation of $M'(V_M,E_M)$. If there are $s$ simple edges and $b$ broken edges in $M'$, where $s+m = |E'_M|$, this generates all possible $2^s 3^b$ extended orientations.   Each of them can be easily listed in $O(|E'_M|)$ delay (actually less, but this is not the dominant cost).

\subsubsection{Putting back the \loghole.}
For each listed $\ori{M'}$ we have to decide how to put back the edges of the cycle $C$, namely, how to find the orientations of $C$ that create directed cycles. 


\begin{definition}
Given the cycle $C(V_C,E_C)$ and $\ori{M}'(V_M,\ori{E}'_M)$, we call \emph{legal orientation} $\ori{C}(V_C,\ori{E}_C)$ any extended orientation of $C$ such that the resulting multigraph $\ori{M}''(V_M, \ori{E}'')$ is cyclic, where $\ori{E}''=\ori{E}'_M\cup \ori{E}_C$. 
\end{definition}
%
The two following cases are possible.
\begin{enumerate}
\item $\ori{M}'$ is cyclic. In this case each edge in $E_C$ can receive any direction, including broken if the edge is a chain edge: each combination of these assignments will produce a legal orientation that will be output.
\item $\ori{M}'$ is acyclic. Since $C$ is a cycle, there are at least two legal orientations obtained by orienting $C$ as a directed cycle clockwise and counterclockwise. Moreover, adding just an oriented subset of edges $D \subseteq C$ to $\ori{M}'$ can create a cycle in $\ori{M}'$: in this case, any orientation of the remaining edges of $C \setminus D$ (including broken for chain edges) will clearly produce a legal orientation.
\end{enumerate}

While the first case is immediate, the second case has to efficiently deal with the following problem.
\begin{problem}
\label{prob:reintegrate}
Given $\ori{M'}$ acyclic and cycle $C$, enumerate all the legal orientations $\ori{C}(V_C,\ori{E}_C)$ of $C$.
\end{problem}

In order to solve Problem~\ref{prob:reintegrate}, we exploit the properties of $C$. In particular, we compute the reachability matrix $R$ among all the nodes in $V_C$, that is, for each pair $u,v$ of nodes in $V_C$, $R(u,v)$ is $1$ if $u$ can reach $v$ in the starting $\ori{M}$, $0$ otherwise. We say that $R$ is \emph{cyclic} whether there exists a pair $i,j$ such that $R(i,j)=R(j,i)=1$. This step can be done by performing a BFS in $\ori{M'}$ from each node in $V_C$: by \Cref{prop:girth} we have $|V_C|<2\lceil\log |V_M|\rceil+1$, and so the cost is $O(|E'_M| \cdot \log |V_M|)$ time. Deciding the orientation of the edges and the chain edges in $E_C$ is done with a ternary partition of the search space. Namely, for each edge $\{u,v\}$ in $E_C$, if $\{u,v\}$ is simple we try the two possible direction assignments, while if it is a chain we also try the broken assignment. In order to avoid dead-end recursive calls, after each assignment we update the reachability matrix $R$ and perform the recursive call only if this partial direction assignment will produce at least a solution: both the update of $R$ and the dead-end check can be done in $O(\log^{2}|V_M|)$ (that is, the size of $R$).

\paragraph*{\bf Scheme for legal orientations.}
The steps are shown by Algorithm~\ref{alg:ternpart}. At the beginning the reachability matrix $R$ is computed and is passed to the recursive routine \texttt{LegalOrientations}. At each step,  $\ori{C}'$ is the partial legal orientation to be completed and $I$ is the set of broken edges declared so far. Also, $j$ is the index of the next edge $\{c_{j},c_{j+1}\}$ of the cycle $C$,  with $1\leq j\leq h$ (we assume $c_{h+1}=c_1$ to close the cycle): if $j=h+1$ then all the edges of $C$ have been considered and we output the solution $\ori{C}'$ together with the list $I$ of broken edges in $\ori{C}'$.
Each time the procedure is called we guarantee that the reachability matrix $R$  is updated.

Let $\{u,v\}$ be the next edge of $C$ to be considered: for each possible direction assignment $(u,v)$ or $(v,u)$ of this edge, we have to decide whether we will be able to complete the solution considering this assignment. This is done by trying to add the arc to the current solution. If there is already a cycle, clearly we can complete the solution. Otherwise, we perform a \textit{reachability check} on $\{c_{j+1},\ldots c_{h+1}\}$: it is still possible to create a directed cycle if and only if any two of the nodes in $\{c_{j+1},\ldots c_{h+1}\}$, say $c_f$ and $c_g$ satisfy $R(c_f,c_g) = 1$ or $R(c_g,c_f) =1$. This condition guarantees that a cycle will be created in the next calls, since we know there are edges in $C$ between $c_f$ and $c_g$ that can be oriented suitably. Finally, when $\{u,v\}$ is a chain, the broken assignment is also considered: $R$ does not need to be updated as the broken edge does not change the reachability of $\ori{M'}$.

The reachability and cyclicity checks are done by updating and checking the reachability matrix $R$ (and restoring $R$ when needed). Updating $R$ when adding an arc $(u,v)$ corresponds to making $v$, and all nodes reachable from $v$, reachable from $u$ and nodes that can reach $u$. This can be done by simply performing an \texttt{or} between the corresponding rows in time $O(\log^2|V_M|)$, since $R$ is $|C|\times|C|$. The reachability check can be done in  $O(\log^{2}|V_M|)$ time. The cyclicity (checking whether a cycle has been already created) takes the same amount of time by  looking for a pair of nodes in $\{c_{1},\ldots c_{j}\}$ $x'$,$y'$  such $R(x',y')=R(y',y')=1$.

\begin{lemma}
Algorithm~\ref{alg:ternpart} outputs in $O(|E'_M|\log |V_M|)$ time the first legal orientation of $C$, and each of the remaining ones with $O(\log^3 |V_M|)$ delay.
\label{lem:delaytern}
\end{lemma}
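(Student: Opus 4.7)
The plan is to bound separately (i) the one-off cost of building the reachability matrix $R$, (ii) the depth of the recursion, (iii) the work per recursive node, and (iv) the absence of dead-end recursive calls; the delay bound then follows by summing work along any backtrack-and-descend path between consecutive outputs.

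For (i), before calling \texttt{LegalOrientations} we build $R$ by running a BFS in $\ori{M}'$ from each of the $|V_C|$ nodes of $C$. Since $|V_C| = O(\log|V_M|)$ by \Cref{prop:girth} and each BFS costs $O(|E'_M|)$, the whole initialization takes $O(|E'_M| \log |V_M|)$ time, which is charged to the first legal orientation only. For (ii), the recursion descends one edge of $C$ per call, so its depth is $|E_C| = O(\log|V_M|)$. For (iii), at each recursive step we perform $O(1)$ updates of $R$ (each of which is an OR of two rows of an $|V_C|\times|V_C|$ matrix, hence $O(\log^2 |V_M|)$ time) plus a constant number of scans of $R$ to test cyclicity and the reachability condition on $\{c_{j+1},\dots,c_{h+1}\}$, again $O(\log^2 |V_M|)$ each. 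So each recursive call does $O(\log^2 |V_M|)$ work excluding recursive descent.

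The main obstacle, and the reason the delay is $O(\log^3|V_M|)$ rather than exponential in the branching, is to verify that every recursive call is \emph{productive}: every node of the recursion tree leads to at least one emitted solution. Here I would prove two claims about the reachability check. First (soundness): if the current partial orientation together with $\ori{M}'$ already contains a directed cycle, or if there exist $c_f,c_g \in \{c_{j+1},\dots,c_{h+1}\}$ with $R(c_f,c_g)=1$, then some completion is legal, because we can close a cycle using the unassigned sub-path of $C$ between $c_f$ and $c_g$ oriented from $c_g$ to $c_f$, and the remaining edges of $C$ can be oriented arbitrarily (chain edges may even be broken). Second (completeness): if neither condition holds, then no assignment to the edges $\{c_j,c_{j+1}\},\ldots,\{c_h,c_1\}$ can create a directed cycle, because any such cycle must enter and leave $C$ at two nodes of $\{c_{j+1},\dots,c_{h+1}\}$ through $\ori{M}'$, and the reachability among those nodes in $\ori{M}'$ is exactly what $R$ records. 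These two facts imply that recursion is pruned exactly at the branches with no solution, so the recursion tree has no internal dead ends.

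Combining the pieces gives the stated bounds. Between two consecutive outputs, the algorithm backtracks to some ancestor and descends again to a leaf; the number of recursive calls traversed is at most twice the recursion depth, i.e.\ $O(\log|V_M|)$, and each call costs $O(\log^2|V_M|)$, yielding $O(\log^3|V_M|)$ delay. For the first legal orientation, the initialization cost of $O(|E'_M|\log|V_M|)$ dominates the single downward path of cost $O(\log^3|V_M|)$, giving the bound $O(|E'_M|\log|V_M|)$ claimed in the lemma.
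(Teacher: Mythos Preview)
Your proof is correct and follows essentially the same structure as the paper's: bound the initialization, the recursion depth, the per-node work, and argue that every recursive call is productive so the delay is twice-depth times per-node cost. You actually give more detail than the paper does on the soundness and completeness of the reachability test, which the paper's own proof simply asserts.

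One small imprecision worth fixing: in your completeness argument you say any closing cycle ``must enter and leave $C$ at two nodes of $\{c_{j+1},\dots,c_{h+1}\}$ through $\ori{M}'$, and the reachability among those nodes in $\ori{M}'$ is exactly what $R$ records.'' At depth $j$ the matrix $R$ records reachability not in $\ori{M}'$ alone but in $\ori{M}'$ together with the already-assigned arcs of $C$ (on $c_1,\dots,c_j$), since $R$ is updated at each step before being passed down. Your argument still goes through with this correction---a putative cycle decomposes into maximal segments along the unassigned sub-path of $C$ alternating with segments in the current partial digraph, and the latter would witness reachability between two distinct nodes of $\{c_{j+1},\dots,c_{h+1}\}$ in $R$---but as written the phrasing could mislead.
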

\begin{proof}
Before calling the \texttt{LegalOrientations} procedure we have to compute the reachability matrix from scratch and this costs $O(|E'_M|\log |V_M|)$ time. In the following we will bound the delay between two outputs returned by the  \texttt{LegalOrientations} procedure.
Firstly, note that each call produces at least one solution. This is true when $j=1$ since we have two possible legal orientations of $C$. Before performing any call at depth $j$, the caller function checks whether this will produce at least one solution. Only calls that will produce at least one solution are then performed. This means that in the recursion tree, every internal node has at least a child and each leaf corresponds to a solution. Hence the delay between any two consecutive solutions is bounded by the cost of a leaf-to-root path and the cost of a root-to-the-next-leaf path in the recursion tree induced by \texttt{LegalOrientations}. Since the height of the recursion tree is $O(\log |V_M|)$, i.e. the edges of $C$, and the cost of each recursion node is $O(\log^2 |V_M|)$, we can conclude that the delay between any two consecutive solutions is bounded by $O(\log^3 |V_M|)$. As it can be seen, it is crucial that the size of $C$ is (poly)logarithmic.
\qed
\end{proof}

\begin{algorithm}[t]
\SetKwFunction{proc}{LegalOrientations}
\SetKwFunction{ccheck}{AllowsCycle}
\KwIn{$\ori{M'}(V_M,\ori{E'}_M)$ acyclic, a cycle $C(V_C,E_C)$ with $V_C\subseteq V_M$}
\KwOut{All the legal orientations $\ori{C}(V_C,\ori{E}_C)$ }
\smallskip
Build the reachability matrix $R$ for the nodes of $V_C$ in $\ori{M'}$\\
Let $V_C=\{c_1,\ldots c_h\}$, where $c_{h+1}=c_1$ by definition\\
Execute \proc $(\ori{C}'(\emptyset,\emptyset), 1, R, \emptyset)$\\[3mm]
\setcounter{AlgoLine}{0}
  \SetKwProg{myproc}{Procedure}{}{}
  \myproc{\proc{$\ori{C}'(V'_C,\ori{E}'_C), j, R, I$}}{
    
    
    \eIf{$j=h+1$ }
        {output $\ori{C}'$ and its set $I$ of broken edges}
        {
            $u \gets c_j$, \: $v \gets c_{j+1}$ 
            
            $R_1 \leftarrow R$ updated by adding the arc $(u,v)$\;
            \If{$R_1$ is cyclic or has positive reachability test on $\{c_{j+1},\ldots, c_{h+1}\}$}{
                \proc ($\ori{C}'(V'_C,\ori{E}'_C\cup \{(u,v)\}), j+1, R_1, I)$\\
            }
            
            $R_2 \leftarrow R$ updated adding the arc $(v,u)$\;
            \If{$R_2$ is cyclic or has positive reachability test on $\{c_{j+1},\ldots, c_{h+1}\}$}{
                \proc ($\ori{C}'(V'_C,\ori{E}'_C\cup \{(v,u)\}), j+1, R_2, I)$\\
            }
            
            \If{$\{u,v\}$ is a chain edge}
            {
            \If{$R$ is cyclic or has positive reachability test on $\{c_{j+1},\ldots, c_{h+1}\}$}{
                \proc ($\ori{C}', j+1, R, I \cup\{\{u,v\}\})$\\ 
                }
            }
        }
}
\caption{Returning all legal orientations of $C$}
\label{alg:ternpart}
\end{algorithm}





\begin{lemma}[Correctness]
\begin{enumerate}
\item All the extended cyclic orientations of $M$ are output. 
\item Only extended cyclic orientations of $M$ are output.
\item There are no duplicates.
\end{enumerate}
\end{lemma}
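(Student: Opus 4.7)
The plan is to exploit the fact that $E_M = E'_M \sqcup E_C$, so every extended orientation $\ori{N}$ of $M$ decomposes \emph{uniquely} as a pair $(\ori{M'},\ori{C})$ with $\ori{N}=\ori{M'}\cup\ori{C}$, and, by the definition of legal orientation, $\ori{N}$ is a \emph{cyclic} extended orientation of $M$ iff $\ori{C}$ is legal for $\ori{M'}$. All three claims thus reduce to showing that the nested loops of Algorithm~\ref{alg:structure} visit each such pair exactly once. The outer loop is straightforward: the ``Finding extended orientations'' construction ranges independently over the $2^s 3^b$ per-edge choices in $M'$, so it enumerates each $\ori{M'}$ once.

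For the inner loop I would show that Algorithm~\ref{alg:ternpart} emits, for a fixed $\ori{M'}$, exactly the legal orientations of $C$, each once. Non-redundancy (3) is immediate at this level: at depth $j$ the recursion branches on three disjoint assignments (two arc directions and, for chain edges, broken) for $\{c_j,c_{j+1}\}$, so leaves of the recursion tree realise pairwise distinct orientations of $E_C$. Soundness (2) is also immediate: a leaf at $j=h+1$ is reached only if at depth $h$ the reachability test on the singleton $\{c_{h+1}\}$ was vacuous, so the only way to recurse was via the ``$R$ is cyclic'' branch; hence the final $\ori{M'}\cup\ori{C}$ is cyclic by construction.

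The main work is completeness (1) for Algorithm~\ref{alg:ternpart}. Fix any legal $\ori{C}$ and any directed cycle $\gamma$ in $\ori{M'}\cup\ori{C}$; I would argue by induction on $j$ that along the recursion path realising $\ori{C}$ the pruning test at depth $j$ succeeds. The arcs of $\gamma$ coming from $\ori{C}$ project to a set of disjoint subpaths of the undirected cycle $C$; if none of them meets the suffix $\{e_{j+1},\ldots,e_h\}$, then all arcs of $\gamma$ are already committed and $R$ is cyclic at this depth. Otherwise, some suffix-subpath $P$ of $\gamma$ enters the suffix at a node $c_f$ and leaves at $c_g$, both in $\{c_{j+1},\ldots,c_{h+1}\}$; the portion of $\gamma$ immediately following $P$ is a walk from $c_g$ to the entry of the next suffix-subpath (or back to $c_f$ if $P$ is the only one) using only arcs of $\ori{M'}$ and prefix arcs, i.e.\ arcs already reflected in $R$, which supplies the required reachability entry. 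The hardest point of the argument is the case in which $\gamma$ meets the suffix in several disjoint subpaths: one has to verify that at least one ``inter-subpath'' external walk of $\gamma$ has \emph{distinct} endpoints in the suffix (otherwise those two subpaths would be adjacent and merge into a single subpath of $\gamma$), so that the exhibited pair $c_f,c_g$ is genuinely non-trivial and really does fire the reachability test.
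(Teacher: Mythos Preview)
Your decomposition $E_M = E'_M \sqcup E_C$ and the reduction of all three claims to ``the nested loops visit each legal pair $(\ori{M'},\ori{C})$ exactly once'' is exactly the paper's argument; the paper's own proof stops at that level and simply asserts that the inner loop enumerates precisely the legal orientations of $C$, relying on the preceding algorithm description rather than proving it inside the lemma.

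Where you go beyond the paper is in actually verifying the pruning of Algorithm~\ref{alg:ternpart}. Your soundness remark (at depth $h$ the reachability test on the singleton $\{c_{h+1}\}$ is vacuously false, so only the ``$R$ cyclic'' disjunct can let the recursion reach $j=h+1$) is correct. Your completeness induction via suffix-subpaths of a witnessing cycle $\gamma$ is also correct: the external walk between two consecutive suffix-subpaths uses only arcs of $\ori{M'}$ and already-committed prefix arcs, all of which are reflected in the current $R$, and its endpoints lie in $\{c_{j+1},\ldots,c_{h+1}\}$. Your ``hardest point'' resolves as you sketch: if those endpoints coincided, the external walk would either be trivial (so the two subpaths were not maximal and should merge) or a nontrivial closed walk already making $R$ cyclic. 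Either way the pruning test at depth $j$ passes. So your proof is sound and strictly more detailed than the paper's, which takes the inner-loop correctness for granted.
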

\begin{proof}
\begin{enumerate}
\item Any extended cyclic orientation $\ori{M}$ can be seen as the union $\ori{M}''$ of $\ori{M}'$ and $\ori{C}$, which are two edge disjoint directed subgraphs. Our algorithm enumerates all the extended orientations of $M'$, and, for each of them, all legal extended orientations $\ori{C}$: if there is a cycle in $\ori{M}$ involving only edges in $E'_M$ all the extended orientations of $C$ are legal; otherwise just the extended orientations $\ori{C}$ of $C$ whose arcs create a cycle in $\ori{M}''$ are legal. Hence any extended cyclic orientation $\ori{M}$ is output.
\item Any output solution is an extended orientation: each edge in $M'$ and in $C$ has exactly one direction or is broken. Moreover, any output solution contains at least a cycle: if there is not a cycle in $M'$, a cycle is created involving the edges in $C$.
\item All the extended orientations $\ori{M}=\ori{M}''$ in output differ for at least an arc in $\ori{E}'_M$ or an arc in $\ori{E_C}$. Hence there are no duplicate solutions.
\end{enumerate}
\qed
\end{proof}

As a result, we obtain delay $\tilde{O}(|E_M|)$.
\begin{lemma}
The extended cyclic orientations of $M(V_M,E_M)$ can be enumerated with delay $\tilde{O}(|E_M|)$ and space $O(|E_M|)$.
\label{lem:delay2}
\end{lemma}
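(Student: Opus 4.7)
The plan is to combine the enumeration of extended orientations of $M'$ with the legal orientations of the \loghole $C$ (Algorithm~\ref{alg:ternpart}), charging the per-$\ori{M'}$ setup work to the guaranteed solutions it produces.

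First, I enumerate the extended orientations of $M'$ by a standard backtracking/Gray-code-style traversal over the ternary alphabet $\{\text{forward}, \text{backward}, \text{broken}\}$ for chain edges (binary for simple edges), so that two consecutive orientations of $M'$ differ on exactly one edge, and therefore the new orientation is produced in $O(1)$ incremental work and is represented in $O(|E'_M|)$ space. For each generated $\ori{M'}$, I then invoke Algorithm~\ref{alg:ternpart}, whose initial setup recomputes the reachability matrix $R$ of the nodes in $V_C$ inside $\ori{M'}$; by Lemma~\ref{lem:delaytern}, this costs $O(|E'_M|\log|V_M|)$ time, and afterwards each legal orientation of $C$ is output with delay $O(\log^3|V_M|)$.

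The key observation used to turn this into a worst-case delay bound is that \emph{every} extended orientation $\ori{M'}$ yields at least one legal orientation of $C$: if $\ori{M'}$ is already cyclic then every orientation of $E_C$ is legal, and if $\ori{M'}$ is acyclic then the two choices of $\ori{C}$ that form a directed cycle on $C$ are legal. Thus, in the global enumeration, between any two consecutively output solutions the algorithm spends either (i) $O(\log^3|V_M|)$ time, if both come from the same $\ori{M'}$, or (ii) $O(|E'_M|\log|V_M|) + O(|E'_M|) + O(\log^3|V_M|) = \tilde{O}(|E_M|)$ time to switch to a new $\ori{M'}$, rebuild $R$, and produce its first legal orientation. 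In both cases the delay is $\tilde{O}(|E_M|)$, matching the claim. Correctness of the enumeration (completeness, soundness, no duplicates) is exactly what was already proved in the preceding correctness lemma, so I only need to bound time and space here.

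For the space bound, I store the current $\ori{M'}$ in $O(|E'_M|)=O(|E_M|)$ cells, plus the backtracking stack used to drive the outer enumeration, which has depth $|E'_M|$ and uses $O(1)$ extra state per level. The inner call to Algorithm~\ref{alg:ternpart} has recursion depth $|V_C|=O(\log|V_M|)$, and at each level it keeps a copy of the reachability matrix of size $O(|V_C|^2) = O(\log^2|V_M|)$; this contributes $O(\log^3|V_M|)$ additional space, which is absorbed by $O(|E_M|)$. The main obstacle is precisely the amortization argument just given: without the guarantee that every $\ori{M'}$ produces at least one legal orientation of $C$, the $O(|E'_M|\log|V_M|)$ reachability-matrix recomputation could not be charged to an output and would break the worst-case delay; the existence of the two ``wrap-around'' directed cycles of $C$ is exactly what makes this amortization tight. \qed
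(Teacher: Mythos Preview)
Your proof is correct and follows essentially the same approach as the paper: enumerate the extended orientations of $M'$, for each one invoke Algorithm~\ref{alg:ternpart}, and bound the worst-case delay by the cost of switching $\ori{M'}$ plus rebuilding $R$ plus producing the first legal orientation, using Lemma~\ref{lem:delaytern}. Your Gray-code refinement for the outer enumeration and your explicit emphasis that \emph{every} $\ori{M'}$ admits at least one legal $\ori{C}$ (so the $O(|E'_M|\log|V_M|)$ recomputation of $R$ is always charged to an output) make the argument slightly more detailed than the paper's, but do not change it; note only that what you call an ``amortization'' is in fact a worst-case charging, exactly as in the paper.
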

\begin{proof}
Finding extended orientations $\ori{M'}$ of $M'$ can be done with $O(|E'_M|)$ delay. Every time a new $\ori{M'}$ has been generated, we apply Algorithm~\ref{alg:ternpart}. By Lemma~\ref{lem:delaytern} we output the first cyclic orientation $\ori{M}$ of $M$ with delay $O(|E_M|\log |V_M|)$ and the remaining ones with delay $O(\log^3 |V_M|)$. Hence the maximum delay between any two consecutive solutions is $O(|E'_M|+|E_M|\log |V_M|)=O(|E_M|\log |V_M|)=\tilde{O}(|E_M|)$. The space usage is linear in all the phases: in particular in Algorithm~\ref{alg:ternpart} the space is $O(\log^2 |V_M|)$, because of the reachability matrix $R$, which is smaller than $O(|E_M|)$.
\qed
\end{proof}

Applying \Cref{lem:delay2} and \Cref{lem:ext}, and considering the setup cost in Section~\ref{sec:setup} ($|V_M|\leq |V|$ and $|E_M|\leq |E|$), we can conclude as follows.

\begin{theorem}
\label{the:delay}
Algorithm~\ref{alg:structure} lists all cyclic orientations of $G(V,E)$ with setup cost $O(|V|^2)$,
and delay $\tilde{O}(|E|)$. The space usage is $O(|E|)$ memory cells.
\end{theorem}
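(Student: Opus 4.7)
The plan is to assemble the theorem by combining the three ingredients already established: the setup cost of Section~\ref{sec:setup}, the enumeration guarantee of Lemma~\ref{lem:delay2} on the multigraph $M$, and the reduction of Lemma~\ref{lem:ext} that converts extended cyclic orientations of $M$ back into cyclic orientations of the input $G$. Since $|V_M|\leq |V|$ and $|E_M|\leq |E|$, every cost expressed on the multigraph can be reread as a cost on the original graph without loss.

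First I would audit the setup cost. Dead-end removal takes $O(m)$ by a single DFS, and chain compression takes $O(m)$ by another traversal that contracts maximal paths of degree-2 vertices. The only super-linear step is the computation of the \loghole: applying the algorithm of~\cite{itai1978finding} (extended to multigraphs) one computes a minimum cycle in $O(|V_M|^2)=O(|V|^2)$ time, and if the cycle returned is not chordless, a logarithmic postprocessing repairs it. This yields the announced setup bound of $O(|V|^2)$.

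Next I would argue the delay. Lemma~\ref{lem:delay2} enumerates all extended cyclic orientations of the multigraph $M$ with delay $\tilde{O}(|E_M|)$ and linear space. By Lemma~\ref{lem:ext}, each extended cyclic orientation of $M$ expands to cyclic orientations of $G$ with an extra $O(|E|)$ amortized cost, since one simply walks the chain edges and, for broken chains, iterates the $2^h-2$ admissible orientations of the corresponding path of $h$ edges; every such expansion produces a fresh output. Summing, the delay between two consecutive cyclic orientations of $G$ is $O(\tilde{O}(|E_M|)+|E|)=\tilde{O}(|E|)$.

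For space, the intermediate structures---the graph $G$, the multigraph $M$, the \loghole $C$, the partial orientation maintained along the recursion, and the set $I$ of broken edges---are all $O(|E|)$; the only extra structure is the reachability matrix $R$ used inside Algorithm~\ref{alg:ternpart}, whose size is $|V_C|^2=O(\log^2|V_M|)=O(|E|)$ as well. Hence the overall space is $O(|E|)$. The only non-routine point I anticipate is checking that the $O(|E|)$ expansion of Lemma~\ref{lem:ext} really absorbs into the delay (rather than into an amortized bound), but this is immediate because at least two solutions for $G$ are produced per extended cyclic orientation of $M$ and the per-solution expansion work is linear in the output size, so the delay stays $\tilde{O}(|E|)$. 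Putting these three observations together gives Theorem~\ref{the:delay}.
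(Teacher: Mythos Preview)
Your proof is correct and follows exactly the paper's approach: the paper's own proof is a one-liner that invokes Lemma~\ref{lem:delay2}, Lemma~\ref{lem:ext}, and the setup analysis of Section~\ref{sec:setup} together with $|V_M|\leq|V|$ and $|E_M|\leq|E|$, and you have simply unpacked those references. The only minor imprecision is your claim that each extended cyclic orientation of $M$ yields at least two orientations of $G$ (it may yield exactly one when no chain is broken and no dead-ends were removed), but this does not affect the delay argument, which only requires that each successive $G$-orientation derived from a fixed $\ori{M}$ is produced in $O(|E|)$ time.
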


\section{Absorbing the setup cost}
\label{sec:absorb}

In this section, we show how to modify our approach to get a setup time equal to the delay, requiring space $\Theta(|V|\cdot |E|)$.

\begin{theorem}
\label{the:absorb}
All cyclic orientations of $G(V,E)$ can be listed with setup cost $\tilde{O}(|E|)$, delay $\tilde{O}(|E|)$, and space usage of $\Theta(|V|\cdot |E|)$ memory cells.
\end{theorem}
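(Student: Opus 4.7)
The plan is to reuse Algorithm~\ref{alg:structure} verbatim, together with the correctness and delay analysis of Lemmas~\ref{lem:ext}, \ref{lem:delaytern}, and~\ref{lem:delay2}, and only rework the single step of Theorem~\ref{the:delay} that currently prevents the setup from matching the delay: the $O(|V_M|^2)$ call to the girth algorithm of~\cite{itai1978finding}. Dead-end removal and chain compression already cost $O(|E|)$, and the main enumeration loop, once a \loghole $C$ is available, has delay $\tilde O(|E|)$. So the theorem reduces to producing a valid \loghole within $\tilde O(|E|)$ time while keeping the total space at $\Theta(|V|\cdot|E|)$.

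For the \loghole I would exploit the invariant, enforced in Section~\ref{sec:rec}, that every vertex of $M$ has degree at least $3$. Pick an arbitrary vertex $r\in V_M$ and run BFS from $r$, halting the first time a non-tree edge $\{u,v\}$ is encountered. As long as no such edge has been seen, each non-root vertex contributes at least two children to the BFS tree, so the tree would carry at least $3\cdot 2^{\ell-1}$ vertices at level $\ell$; by Lemma~\ref{prop:girth}, a non-tree edge must appear at some level $\ell=O(\log|V_M|)$, after only $O(|E|)$ edges have been scanned. The cycle obtained by joining the tree paths from $r$ to $u$ and from $r$ to $v$ with $\{u,v\}$ has length $O(\log|V_M|)$. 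To enforce chordlessness I would use a hashed adjacency oracle built in $\tilde O(|E|)$ time, test the $O(\log^2|V_M|)$ pairs of cycle vertices, and iteratively shorten the cycle at the shortest chord; each shortening strictly reduces the length, so at most $O(\log|V_M|)$ iterations occur and the cleanup costs only polylog extra time. The resulting cycle $C$ satisfies exactly the property used by Algorithm~\ref{alg:ternpart} and Lemma~\ref{lem:delaytern}.

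The $\Theta(|V|\cdot|E|)$ space budget I would spend on auxiliary data structures supporting the per-extended-orientation reachability computation performed at the top of Algorithm~\ref{alg:ternpart} — for instance, by caching, for each vertex of $V_M$, a BFS exploration of the current orientation of $M'$ and then refreshing these structures incrementally as simple and chain edges flip between their three admissible states in the outer \textbf{for} loop of Algorithm~\ref{alg:structure}. The main obstacle in the proof is precisely this book-keeping: one must show that maintaining the $|V_M|$ reachability structures under orientation changes takes $\tilde O(|E|)$ amortized work per extended orientation, so that the delay bound from Lemma~\ref{lem:delay2} is preserved under the new, lighter setup. Once this is established, the setup drops to $\tilde O(|E|)$, the delay remains $\tilde O(|E|)$, and the total memory consumption is $\Theta(|V|\cdot|E|)$, proving Theorem~\ref{the:absorb}.
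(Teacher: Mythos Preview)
Your approach differs fundamentally from the paper's, and your core idea --- replacing the Itai--Rodeh girth call by a single BFS in $M$ that exploits the minimum-degree-$3$ invariant --- is correct and in fact yields more than the theorem asserts. Because every vertex of $M$ has degree at least $3$, the first non-tree edge in a BFS from any root closes a cycle of length $O(\log|V_M|)$, and your chord-cleaning pass turns it into a chordless cycle of the same order; this is all that Lemmas~\ref{lem:delaytern} and~\ref{lem:delay2} require (the ``girth or girth${}+1$'' clause in the definition of a \loghole is never used downstream). With this change, Algorithm~\ref{alg:structure} already runs with $\tilde O(|E|)$ setup, $\tilde O(|E|)$ delay, and only $O(|E|)$ space: the reachability matrix $R$ in Algorithm~\ref{alg:ternpart} is rebuilt from scratch in $O(|E'_M|\log|V_M|)$ time for each extended orientation of $M'$, and that cost is already inside the delay budget of Lemma~\ref{lem:delay2}. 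The ``main obstacle'' you identify (maintaining $|V_M|$ incremental reachability structures across orientation flips) is therefore a non-problem, and the $\Theta(|V|\cdot|E|)$ space you try to justify with it is not needed in your argument. As written, though, you leave that obstacle unresolved, so the proposal is formally incomplete for the wrong reason.

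The paper takes a completely different route and does not speed up the \loghole computation at all. It performs one BFS in $G$ from an arbitrary node $u$; if the resulting cycle $C_u$ has length below $\log n$ it is used directly as the \loghole. Otherwise two algorithms are run in parallel: a trivial $A_1$ that fixes one directed orientation of the long $C_u$ and outputs $n$ distinct completions of it, and the full Algorithm~\ref{alg:structure} as $A_2$ with its heavy setup in the background. The $n$ easy solutions from $A_1$ are spaced out to mask $A_2$'s setup; a trie on $A_1$'s outputs together with a buffer $Q$ holding up to $n$ pending solutions of $A_2$ handles duplicates once $A_2$ takes over. The $\Theta(|V|\cdot|E|)$ term in Theorem~\ref{the:absorb} is precisely this buffer $Q$, not any reachability cache. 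So the two arguments explain the stated bound in unrelated ways; yours is more direct and, once the superfluous caching paragraph is dropped, actually beats the space bound claimed.
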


We use $n = |V|$ and $m = |E|$ for brevity.  Let $A_1$ be the
following algorithm that takes $T_1 = O(mn)$ time to generate $n$
solutions, each with $\tilde{O}(m)$ delay, starting from any given
cycle of size $\geq \log n$. This cycle is found by performing a BFS
on an arbitrary node $u$, and identifying the shortest cycle $C_u$
containing $u$. Note that $C_u$ is a \loghole as required. Now, if
$|C_u| < \log n$, we stop the setup and run the algorithms in the
previous sections setting $C=C_u$. The case of interest in this
section is when $|C_u| \geq \log n$. We take a cyclic orientation
$\ori{C_u}$ of $C_u$, and then $n$ arbitrary orientations of the edges
in $G \setminus C_u$.  The setup cost is $O(m)$ time and we can easily
output each solution in $\tilde{O}(m)$ delay. We denote this set of
$n$ solutions by $Z_1$.

Also, let $A_2$ be the algorithm behind Theorem~\ref{the:delay}, with a setup cost of $O(mn)$ and  $\tilde{O}(m)$ delay (i.e. Algorithm~\ref{alg:structure}). We denote the time taken by $A_2$ to list the first $n$ solutions, including the $O(mn)$ setup cost, by $T_2 = \tilde{O}(mn)$, and this set of $n$ solutions by $Z_2$. Since $Z_1$ and $Z_2$ can have nonempty intersection, we want to avoid duplicates. 

We show how to obtain an algorithm $A$ that lists all the cyclic orientations without duplicates with $\tilde{O}(m)$ setup cost and delay, using $O(mn)$ space. Even though the delay cost of $A$ is larger than that of $A_1$ and $A_2$ by a constant factor, the asymptotic complexity is not affected by this constant, and remains $\tilde{O}(m)$. 

Algorithm $A$ executes simultaneously and independently the two algorithms $A_1$ and $A_2$. Recall that these two algorithms take $T_1+T_2$ time in total to generate $Z_1$ and $Z_2$ with $\tilde{O}(m)$  delay. However those in $Z_2$ are produced after a setup cost of $O(mn)$.  Hence $A$ slows down on purpose by a constant factor $c$, thus requiring $c (T_1+T_2)$ time: it has time to find the distinct solutions in $Z_1 \cup Z_2$ and build a dictionary $D_1$ on the solutions in $Z_1$. (Since an orientation can be represented as a binary string of length $m$, a binary trie can be employed as dictionary $D_1$, supporting each dictionary operation in $O(m)$ time.) During this time, $A$ outputs the $n$ solutions from $Z_1$ with a delay of $c (T_1+T_2)/n = \tilde{O}(m)$ time each, while storing the rest of solutions of $Z_2\setminus Z_1$ in a buffer $Q$. 

After $c (T_1+T_2)$ time, the situation is the following: Algorithm $A$ has output the $n$ solutions in $Z_1$ with $\tilde{O}(m)$ setup cost and delay. These solutions are stored in $D_1$, so we can check for duplicates. We have buffered at most $n$ solutions of $Z_2\setminus Z_1$ in $Q$. Now the purpose of $A$ is to continue with algorithm $A_2$ alone, with $\tilde{O}(m)$ delay per solution, avoiding duplicates. Thus for each solution given by $A_2$, algorithm $A$ suspends $A_2$ and waits so that each solution is output in $c (T_1+T_2)/n$ time: if the solution is not in $D_1$, $A$ outputs it; otherwise $A$ extracts one solution from the buffer $Q$ and outputs the latter instead. Note that if there are still $d$ duplicates to handle in the future, then $Q$ contains exactly $d$ solutions from  $Z_2\setminus Z_1$ (and $Q$ is empty when $A-2$ completes its execution). Thus, $A$ never has to wait for a non-duplicated solution. The delay is the maximum between $c (T_1+T_2)/n$ and the delay of $A_2$, hence $\tilde{O}(m)$. The additional space is dominated by that of $Q$, namely, $O(mn)$ memory cells to store up to $n$ solutions.

We also have an amortized cost using the lemma below, where $f(x) = \tilde{O}(x)$ and $s = |V|$.

\begin{lemma}
Listing all the extended cyclic orientations of $M(V_M,E_M)$ with delay $O(f(|E_M|))$ and setup cost $O(s \cdot |V_M|)$ implies that the average cost per solution is $O(f(|E_M|)+ |E_M|)$.
\label{lem:preproc}
\end{lemma}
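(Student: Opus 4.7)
The plan is to amortize the setup cost against the number of solutions produced. Let $N$ denote the total number of extended cyclic orientations of $M$. The key invariant to establish first is that the setup parameter $s$ satisfies $N \geq 2^s$. This is consistent with how $s$ is used throughout the paper: once the \loghole $C$ is fixed with one of its two directed-cycle orientations, every edge in $E_M \setminus E_C$ contributes an independent binary choice (and each chain edge even a ternary one via the broken option), so $N \geq 2 \cdot 2^{|E_M|-|E_C|}$; since $M$ has minimum degree $\geq 3$ and $|E_C|$ is only logarithmic, one checks $|E_M|-|E_C|+1 \geq s$ for the $s$ underlying the $O(s\cdot|V_M|)$ setup bound.

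Granted this invariant, the average cost per solution satisfies
\[
\frac{O(s\cdot|V_M|)}{N} + O(f(|E_M|)) \;\leq\; O\!\left(\frac{s\cdot |V_M|}{2^s}\right) + O(f(|E_M|)).
\]
I would then invoke the elementary bound $s/2^s \leq 1/2$ valid for every integer $s\geq 1$ (the case $s=0$ incurs no setup cost at all), which collapses the first term to $O(|V_M|)$ independently of $s$.

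To finish, I would replace $|V_M|$ by $|E_M|$ through a handshake count: dead-end removal followed by chain compression leaves every vertex of $M$ with degree at least $3$, so $2|E_M| = \sum_{v\in V_M}\deg(v) \geq 3|V_M|$, giving $|V_M| = O(|E_M|)$. Combining this with the previous display yields the claimed amortized cost of $O(f(|E_M|)+|E_M|)$ per solution.

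The main obstacle is the first step: the lemma is stated for an abstract $s$ but implicitly presupposes the lower-bound invariant $N \geq 2^s$ tying $s$ to the solution count. I would therefore make this invariant an explicit hypothesis of the lemma (it is indeed maintained by our setup with $s = |V|$, since the number of simple/chain edges outside any \loghole dominates $|V|$ under minimum degree $\geq 3$), after which the amortization is essentially a one-line computation using $s/2^s = O(1)$ and the handshake inequality.
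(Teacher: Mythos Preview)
Your amortization computation is correct and matches the paper's: divide the setup cost $O(s\cdot|V_M|)$ by a solution count lower-bounded by $2^s$, use $s/2^s=O(1)$, and finish with $|V_M|=O(|E_M|)$ from minimum degree $\geq 3$. Where you diverge from the paper is in how the invariant $N\geq 2^s$ is obtained. You propose to treat it as an explicit hypothesis (and justify it for $s=|V|$ via the logarithmic size of the \loghole). The paper instead \emph{constructs} $s$ inside the proof: it runs one BFS from an arbitrary node $u$ to find a hole $C_u=(V_u,E_u)$, then runs BFSs only from the nodes in $V_M\setminus V_u$ to finish locating the \loghole. This gives an adaptive $s=|V_M\setminus V_u|$, so the setup cost is $O(s\cdot|V_M|)$ by construction, and simultaneously $N\geq 2^{|E_M\setminus E_u|}\geq 2^{|V_M\setminus V_u|}=2^s$ since orienting $C_u$ as a directed cycle and the remaining $|E_M\setminus E_u|$ edges arbitrarily already yields that many extended cyclic orientations. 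Thus the paper's proof carries extra algorithmic content (an adaptive setup whose cost scales with how large the first hole found is), whereas your version cleanly isolates the amortization step but pushes the matching of $s$ to the solution count outside the lemma. Both are valid; the paper's route is self-contained, yours is more modular.
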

\begin{proof}
We perform a BFS on an arbitrary node $u$, and identify the shortest cycle $C_u(V_u,E_u)$ that contains $u$. This costs $O(m)$ time. 
Note that $C_u(V_u,E_u)$ is a hole (i.e. it has no chords). Note that a minimum cycle in $M$ either is $C_u$ or contains a node in $V_M-V_u$: hence we perform all the BFSs from each node in $V_M-V_u$, as explained in~\cite{itai1978finding} with an overall cost of $O(|V_M|\cdot |V_M-V_u|)$. The number of extended orientations of $M$ is at least $2^{|E_M-E_u|}\geq 2^{|V_M-V_u|}$.
Our setup cost is $O(s \cdot |V_M|)$, with $s\leq |V_M|$, and the number of solutions is at least $2^s$. The overall average cost per solution is at most
\[\frac{O(2^s\cdot f(|E_M|) + s \cdot |V_M| )}{2^s}=O\left(f(|E_M|) + |E_M|\cdot \frac{s}{2^s}\right)\]
\qed
\end{proof}

\section{Conclusions}
\label{sec:conclusions}

In this paper we considered the problem of efficiently enumerating cyclic orientations of graphs. The problem is interesting from a combinatorial and algorithmic point of view, as the fraction of cyclic orientations over all the possible orientations can be as small as 0 or very close to 1. We provided an efficient algorithm to enumerate the solutions with delay $\tilde{O}(m)$ and overall complexity $\tilde{O}(\alpha\cdot m)$, with $\alpha$ being the number of solutions.

\bibliographystyle{abbrv}

\end{document}